\newtheorem{theorem}{Theorem}
\newtheorem{proposition}{Proposition}
\title{The Tradeoff Between Coverage and Computation in Wireless Networks}
\author{Erdem Koyuncu \\ Department of Electrical and Computer Engineering, University of Illinois at Chicago }
\begin{document}
\maketitle

\begin{abstract}
We consider a distributed edge computing scenario consisting of several wireless nodes that are located over an area of interest. Specifically, some of the ``master'' nodes are tasked to sense the environment (e.g., by acquiring images or videos via cameras) and process the corresponding sensory data, while the other nodes are assigned as ``workers'' to help the computationally-intensive processing tasks of the masters. A new tradeoff that has not been previously explored in the existing literature arises in such a formulation: On one hand, one wishes to allocate as many master nodes as possible to cover a large area for accurate monitoring. On the other hand, one also wishes to allocate as many worker nodes as possible to maximize the computation rate of the sensed data. It is in the context of this tradeoff that this work is presented. By utilizing the basic physical layer principles of wireless communication systems, we formulate and analyze the tradeoff between the coverage and computation performance of spatial networks. We also present an algorithm to find the optimal tradeoff and demonstrate its performance through numerical simulations.
\end{abstract}
\begin{IEEEkeywords}
Spatial coverage, distributed edge computing. 
\end{IEEEkeywords}
\section{Introduction}
In many applications, one needs to continuously monitor or cover a geographical area of interest \cite{huang2005coverage, meguerdichian2001coverage, cardei2006energy}. Objectives can be detecting significant events on the area, collecting data from various information sources such as sensors or mobile users, among other use cases. 
For example, multiple unmanned aerial vehicles (UAVs) acting as mobile base stations may be tasked to collect data from ground sensor units \cite{alzenad20173, 8038869, ekj14, ekc21}. In another scenario, the UAVs may themselves act as sensors and be tasked to collect images or video from an urban area for detecting various types of illegal activity \cite{he2018uav,zaheer2016aerial}. On the other hand, the collected data is often useless without further processing. As an example, suppose that the UAVs are tasked to detect speeding vehicles or drunk drivers on a highway using onboard cameras. The images acquired by the UAVs should be processed for determining the speed and  the motion of the vehicle, reading the license plate if possible, etc. 



Many sensing applications are time-critical, meaning that the collection of the data and its further processing should be completed with as little delay as possible. In particular, delaying the detection of a speeding vehicle or a drunk driver may have disastrous consequences as the driver of the vehicle may cause an accident unless stopped by law enforcement. Offloading high-rate sensed data to a fixed ground processing unit may incur large delays due to the distance of the unit to the wireless sensor node, and may thus not be feasible. In a remote environment, a ground processing unit may not even exist. The sensed data should then be processed by the sensors themselves, resulting in an edge computing scenario \cite{shi2016edge}.


The state-of-the-art for processing image or video for event detection or classification is to use deep neural networks, which are computationally very demanding. On the other hand, most edge devices are especially limited in terms of their power and computational capabilities. For example, a single UAV running a deep neural network or a general machine learning algorithm by itself may result in an unacceptable computational delay in the case of a time-critical application. We thus consider offloading the computationally intensive tasks of one wireless node to multiple nodes in general. 

A general survey on distributed computing schemes in wireless networks can be found in \cite{datla2012wireless}. It has been shown that agents within close proximity can form ad hoc networks tailored for cooperative computation \cite{6903299, Miluzzo:2012:VMC:2307849.2307854}. Load balancing  on such ``transient clouds'' have also been studied \cite{7037232}. A fundamental problem in this context is to optimally allocate the available computing tasks and wireless resources to different nodes \cite{mao2017joint, xing2018joint, guo2017energy, guo2018mobile, zhou2019computation}. The effect of interference on optimal resource allocation have been studied in \cite{wang2017joint}. Applications of wireless edge computing to content caching \cite{wang2017computation}, augmented reality applications \cite{al2017energy}, and UAV networks \cite{zhou2018computation} are also available. 

The key observation of the present work is that the two goals of achieving high coverage and low computation delay work against each other. In fact, achieving high coverage requires acquiring and processing the coverage data from as many nodes as possible. On the other hand, achieving a low computation delay is only possible by discarding part of the coverage data, effectively utilizing the computational resources of the network to process the remaining data faster. Our goal is to analyze the corresponding tradeoff between coverage and computation delay. In this work, we focus on minimizing the average computation delay, as opposed to goal of minimizing the maximum computation delay over all tasks. As also summarized above, there are many existing studies that analyze the coverage and computation performances of networks individually. On the other hand, to the best of our knowledge, the tradeoff between these two important figures of merit has not been previously identified or analyzed. 


The rest of this paper is organized as follows: In Section \ref{secsystemmodel}, we introduce the system model. In Section \ref{sechardness}, we prove that finding the optimal tradeoff is NP-complete. In Section \ref{secalgotofindclusters}, we introduce a low-complexity algorithm that provides a locally optimal solution. In Section \ref{secnumerical}, we present numerical results. Finally, in Section \ref{secconclusions}, we draw our main conclusions.

\section{System Model}
\label{secsystemmodel}
We consider $n$ nodes with locations $u_1,\ldots,u_n \in A$, where $A\subset\mathbb{R}^d$ is an area of interest, and $d\in\{1,2,3\}$ is the ambient dimension. The cases $d=1$, $d=2$, and $d=3$ may correspond to cars on a straight highway, mobile sensors on the ground, or unmanned aerial vehicles (UAVs) on the air, respectively. 

Each node can cover or survey any location that is within a distance $D$ to itself. In other words, Node $i$ can survey its coverage area $\{x:\|x - u_i\| \leq D\}$. Also, each node acquires computation tasks from its coverage area at a rate of $R_T$ tasks per second. As an example, suppose that each node is equipped with a camera that acquires video frames from its coverage area at a rate of $30$ frames per second. We would like to pass each frame through a neural network for classification purposes, e.g. for the purpose of detecting some significant event. We can then declare that each node acquires $R_T = 30$ tasks per second. Each task corresponds to feeding a video frame to a neural network and obtaining the output. One can also define one task as processing one second of video. In this case, each node would acquire one task per second.

We consider a scenario where we only process the tasks of a certain subset $M\subset \{1,\ldots,n\}$ of nodes that we shall refer to as ``master nodes.'' Also, Master Node $i$, where $i\in M$, is assigned a set of worker nodes $W_i\subset\{1,\ldots,n\}$ to help processing the tasks of the master. We assume the sets $M,W_1,\ldots,W_n$ are disjoint. We refer to the set $C_i \triangleq \{i\} \cup W_i$ of Master Node $i$ and its workers as Cluster $i$.

\begin{figure}[h]
\scalebox{0.42}{\includegraphics{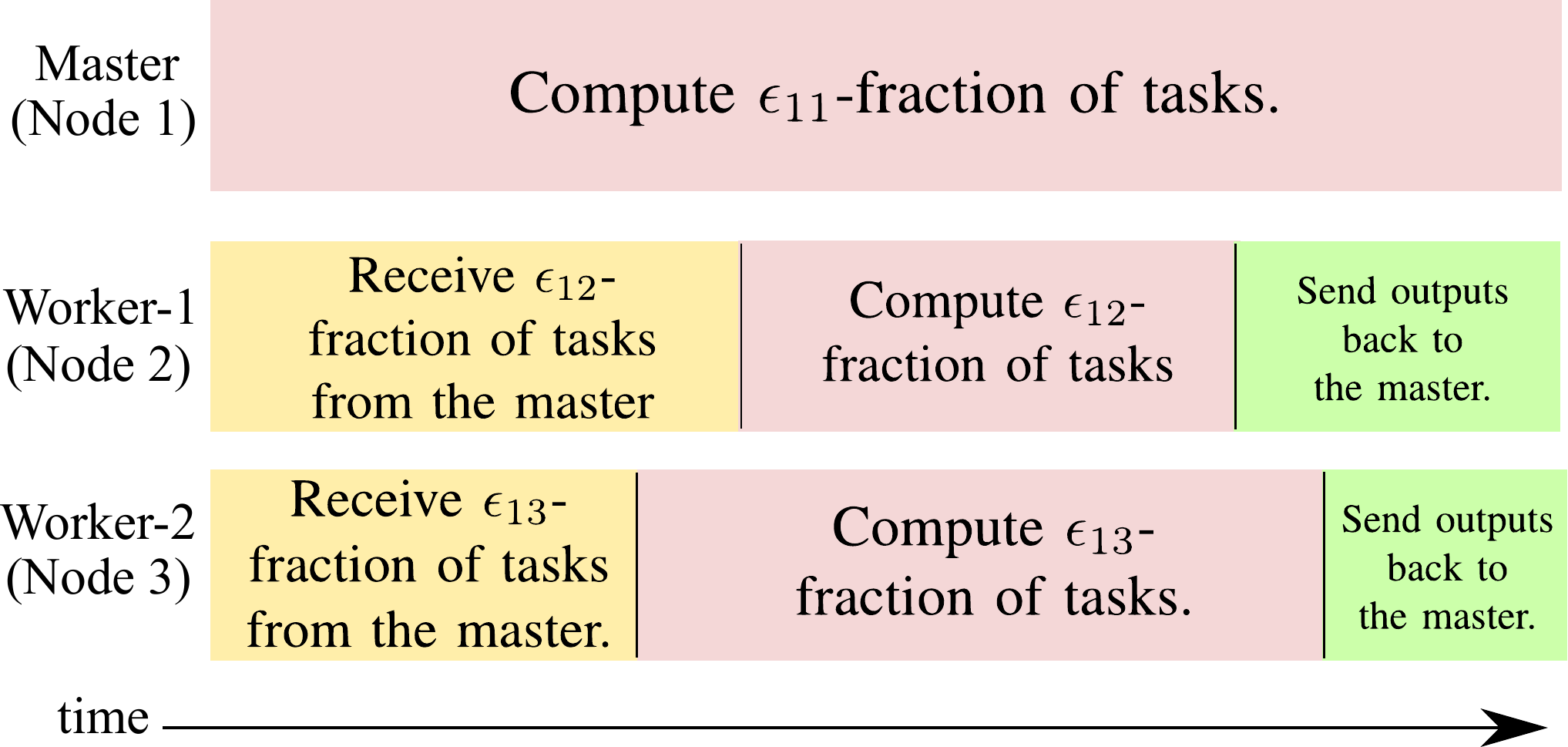}}
\caption{A possible frame for Cluster $1$ with workers $W_1 = \{2,3\}$, under the assumption that all nodes within the cluster finish their computation and communication jobs at the same time. In fact, Proposition \ref{propopttaskassign} of Section \ref{sechardness} will show that this assumption is optimal.}
\label{computcommunframe}
\end{figure}

One example schedule for communication and computing at a given cluster of master node and its workers is illustrated in Fig. \ref{computcommunframe}. Time is divided into disjoint frames. At each frame, $T$ tasks should be processed at each master node, resulting in a total of $T|M|$ tasks to process at each frame. Suppose that Master Node $i$ processes $\epsilon_{ii}$-fraction of the tasks, while Worker Node $j$, where $j\in W_i$, processes $\epsilon_{ij}$-fraction of the tasks.\footnote{In general, we allow $\epsilon_{ij}$s to be arbitrary real numbers, in which case $\epsilon_{ij}T$ is not necessarily an integer. To deal with this technicality, one can consider the $T\rightarrow\infty$ regime, in which case all the computation rate or coverage results presented in the paper will be asymptotically achievable.} In other words, within Cluster $i$, Node $j$ processes $\epsilon_{ij}$-fraction of the tasks, where $j\in C_i$. We have the constraint $\sum_{j\in C_i} \epsilon_{ij} = 1$ for every $i\in M$. At the beginning of each frame, the master node begins calculating its $\epsilon_{ii}T$ tasks. Let $\gamma_i$ denote the processing speed of Node $i$, measured in tasks per second. It thus takes $\frac{\epsilon_{ii}T}{\gamma_i}$ seconds for Master Node $i$ to finish processing its tasks. 

At the beginning of each frame, the master node also sends the individual task data to each one of its workers. Specifically, given $i\in\{1,\ldots,n\}$ and $j\in W_i$, Master Node $i$ sends the data for $\epsilon_{ij} T$ tasks to Worker Node $j$. Assuming that the length of the data to be processed at each task is $b_0$ bits, Master Node $i$ thus sends $\epsilon_{ij} T b_0$ bits to Worker Node $j$. Now, let the rate of data communication between Nodes $i$ and $j$ of the network is given by $\rho_{ij}$. The transmission of the task data from Master Node $i$ to Worker Node $j$ then takes $\epsilon_{ij} T b_0/\rho_{ij}$ seconds. We assume orthogonal channels between a master and its worker nodes so that transmissions from a master node to its workers occur simultaneously. We leave  non-orthogonal access as an interesting direction for future work. 

Let us now discuss the nature of data rates between nodes in more detail. Most of the conclusions of the paper can be generalized to different models that specify inter-node communication rates $\rho_{ij}$s. We will, however, always assume that the reliable communication rate between two nodes increases as the distance between them decreases. In particular, $\rho_{ij}$ should be expressed as $\rho_{ij} = \rho(\|u_i - u_j\|)$, where $\rho(\cdot)$ is a monotonically non-increasing function. Also, if two nodes are within a finite distance, then a non-zero communication rate between them should be achievable; i.e., $d > 0 \implies \rho(d) > 0$.


Going back to our communication and computing scheme, as soon as a worker node receives all its task data, it begins its computations.\footnote{Therefore, in our model, a worker cannot begin any computation unless it receives all its tasks. This is to take into account a possible joint compression of the tasks at the master node side. For example, if each task is a video frame to be processed, then multiple frames can be coded together at the master node and then transmitted. Decoding at the worker is possible only when the worker receives data corresponding to multiple frames.}
 Let us recall that $\gamma_i$ denotes the processing speed of Node $i$, measured in tasks per second.
It then takes $\frac{\epsilon_{ij} T}{\gamma_j}$ seconds for Worker Node $j$ to process all its tasks. 

As soon as a worker node finishes computing all its tasks, it sends the corresponding outputs back to the master node. To calculate the transmission times of these outputs, suppose that processing each task results in an output of $b_1$ bits. Then, given $i\in\{1,\ldots,n\}$ and $j\in W_i$, Worker Node $j$  sends $\epsilon_{ij} T b_1$ bits back to Master Node $i$. In this work, we assume symmetric links so that the capacities of the forward and backward links between two nodes are the same. In other words, $\rho_{ij} = \rho_{ji},\,\forall i,j$. Hence, the transmission from Worker Node $j$ to Master Node $i$ takes $\frac{\epsilon_{ij} T b_1}{\rho_{ij}}$ seconds.


It follows from the above transmission/computing scheme that the cluster led by Master Node $i$ is completed within
\begin{align}
    \tau_i \triangleq \max_{j\in\{i\}\cup W_i}\left\{\frac{\epsilon_{ij} T b_0}{\rho_{ij}} + \frac{\epsilon_{ij} T}{\gamma_j} +  \frac{\epsilon_{ij} T b_1}{\rho_{ij}}\right\} \mbox{\,\,seconds,}
\end{align}
with the convention that $\rho_{ii} = \infty$.  Equivalently, the cluster can process $R_i \triangleq \frac{T}{\tau_i}$ tasks per second. 

Note that only the data acquired from the coverage area of the master nodes are processed. The total coverage area provided by the master nodes are given by
\begin{align}
\label{coverageareamaster}
    C \triangleq \left| \{x:\exists i\in M ,\,\| x - u_i\| \leq D\} \right|,
\end{align}
where $|\cdot|$ denotes the $d$-dimensional volume of a set. The sensing information acquired from the coverage area is processed with a computation rate of
\begin{align}
   R \triangleq \min_{i\in M} R_i \,\,\mbox{tasks}/\mbox{second}
\end{align}
or greater. The computation rate $R$ of the network has the following interpretation: Recall that each vehicle acquires tasks from the environment at a rate of $R_T$ tasks per second. If $R \geq R_T$, then the network is stable in the sense that the number of unprocessed tasks is bounded by a constant at all times. However, if $R < R_T$, the number of unprocessed tasks grows to infinity with time, resulting in an unstable network.

In this work, we are interested in determining the tradeoff between the computation rate $R$ and the coverage $C$. In other words, we would like to determine the tradeoff 
\begin{align}
\label{tradeoffform}
R^{\star}(c) \triangleq \max\{R : C \geq c\}
\end{align}
for different values of the coverage $c$. From now on, to conveniently present our analytical results, we set $b_0 = b_1 = \frac{1}{2}$ without loss of generality (Later in Section \ref{secnumerical}, we will use practical values for all parameters). In this case, the computation rate can be expressed as
\begin{align}
\label{comprateopt}
    R = 1 / \max_{i\in M} \max_{j\in C_i}\epsilon_{ij}\alpha_{ij},
\end{align}
where
\begin{align}
\alpha_{ij}\triangleq  \frac{1}{\rho_{ij}} + \frac{1}{\gamma_j}.
\end{align}
Note that an equivalent problem is to investigate the tradeoff between coverage and the per-task computation delay $\tau \triangleq \frac{1}{R}$.


According to our formulation so far, if a particular location on the area of interest is covered, say, $N > 1$ times by $N$ master nodes, then the corresponding data acquired from this location is also processed $N > 1$ times. While processing the same data multiple times seems like a waste of computational resources, it may also be unavoidable in many practical settings. For example, suppose that the task is to acquire an image of the coverage area and feed it to a neural network. If a master node knows that a part $E$ of its coverage area is already covered by another master node, it may choose not to process the pixels acquired from $E$, and leave the processing of $E$ to other masters. This would however require more coordination among nodes (as they also need to coordinate which master should process which locations), and a sophisticated neural network that accepts an arbitrarily sized and shaped image input. On the other hand, if each node of the network corresponds to (say) an access/processing point, and if each location on the area of interest corresponds to a sensor that provides sensory data to these access points, then it may make perfect sense to process a given sensor's data only once. Extensions of our results in this direction will be discussed elsewhere.


\section{Hardness of Finding the Optimal Tradeoff}
\label{sechardness}
We first seek to determine the tradeoff (\ref{tradeoffform}) exactly. It is easily observed that the coverage $C$ is independent of the fractions of tasks $\epsilon_{ij}$s allocated among masters and workers. We can thus freely optimize the computation rate (\ref{comprateopt}) over $\epsilon_{ij}$s. The following proposition performs this optimization.

\begin{proposition}
\label{propopttaskassign}
An optimal choice for the task assignments $\epsilon_{ij},\,i\in M,\,j\in C_i$ that maximize (\ref{comprateopt}) are given by
\begin{align}
\label{opttaskassignments}
    \epsilon_{ij} = \frac{1}{\alpha_{ij}} / \sum_{j\in C_i} \frac{1}{\alpha_{ij}}.
\end{align}
The corresponding computation rate is
\begin{align}
\label{comprateoptoptxxx}
    R' = \min_{i\in M} \sum_{j\in C_i} \frac{1}{\alpha_{ij}}.
\end{align}
\end{proposition}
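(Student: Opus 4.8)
The plan is to first note that maximizing the computation rate (\ref{comprateopt}) is equivalent to minimizing $\max_{i\in M}\max_{j\in C_i}\epsilon_{ij}\alpha_{ij}$, and that this minimization decouples across clusters. Indeed, the task fractions of distinct clusters are tied together by no common constraint: the only restrictions on $(\epsilon_{ij})_{j\in C_i}$ are $\epsilon_{ij}\ge 0$ and $\sum_{j\in C_i}\epsilon_{ij}=1$, and the objective is the maximum over $i\in M$ of the per-cluster bottleneck $f_i\triangleq\max_{j\in C_i}\epsilon_{ij}\alpha_{ij}$, where $f_i$ depends only on the variables of Cluster $i$. Hence I would reduce the problem to minimizing each $f_i$ separately over the probability simplex on $C_i$; the minimal value of the overall objective is then the maximum of these per-cluster minima, and the optimal rate $R'$ is its reciprocal.

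For a fixed cluster $i$, I would first record that $\alpha_{ij}\in(0,\infty)$ for all $j\in C_i$ --- which follows since $\gamma_j<\infty$, since $\rho_{ij}>0$ for nodes at finite distance, and since $\alpha_{ii}=1/\gamma_i$ under the convention $\rho_{ii}=\infty$ --- so that the reciprocals below are well defined. The crux is then the elementary bound
\begin{align}
1=\sum_{j\in C_i}\epsilon_{ij}=\sum_{j\in C_i}(\epsilon_{ij}\alpha_{ij})\frac{1}{\alpha_{ij}}\le\Bigl(\max_{j\in C_i}\epsilon_{ij}\alpha_{ij}\Bigr)\sum_{j\in C_i}\frac{1}{\alpha_{ij}},
\end{align}
valid for every feasible assignment, which rearranges to $f_i\ge 1\big/\sum_{j\in C_i}(1/\alpha_{ij})$. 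I would then check that the assignment (\ref{opttaskassignments}) is feasible and equalizes all the products, namely $\epsilon_{ij}\alpha_{ij}=1\big/\sum_{k\in C_i}(1/\alpha_{ik})$ for every $j\in C_i$, so it attains this lower bound and is therefore optimal, with optimal value $f_i=1\big/\sum_{j\in C_i}(1/\alpha_{ij})$.

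Putting the clusters back together, the minimized value of $\max_{i\in M}f_i$ is $\max_{i\in M}1\big/\sum_{j\in C_i}(1/\alpha_{ij})$, so the optimal computation rate is $R'=\min_{i\in M}\sum_{j\in C_i}(1/\alpha_{ij})$, which is (\ref{comprateoptoptxxx}). The step I expect to require the most care --- though it is not deep --- is the clean justification of the decoupling across clusters, which rests on the feasible set being a product of simplices and the objective being a coordinatewise maximum, together with the bookkeeping that keeps every $\alpha_{ij}$ finite and positive, in particular the $j=i$ term. In place of the displayed inequality one could instead argue by exchange (if two of the products within a cluster were unequal, shift task fraction from the larger to the smaller to strictly decrease $f_i$, forcing equality at the optimum) or via the KKT conditions for this min-max over a simplex, but the inequality above is the shortest route and simultaneously certifies the optimality of (\ref{opttaskassignments}).
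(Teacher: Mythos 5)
Your proof is correct, and it follows the same overall decomposition as the paper's: both reduce the min--max over all task fractions to independent per-cluster minimizations over a simplex (the paper does this via a $\min\max \geq \max\min$ swap followed by the observation that the inner objective depends only on Cluster $i$'s own variables; you do it by noting the feasible set is a product of simplices and the objective a coordinatewise maximum --- the same fact, packaged a bit more transparently). Where you genuinely differ is the per-cluster step. The paper argues by exchange that some optimizer equalizes the products $\epsilon_{ij}\alpha_{ij}$ (asserting ``it is easily seen,'' and with a small sign slip: the displayed perturbation condition should read $(\epsilon_{ij}-\delta)\alpha_{ij}=(\epsilon_{ik}+\delta)\alpha_{ik}$ since $\epsilon_{ik}$ is increased), and then solves for the equalizer using $\sum_{j}\epsilon_{ij}=1$. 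You instead prove the lower bound $\max_{j}\epsilon_{ij}\alpha_{ij}\geq 1\big/\sum_{j\in C_i}(1/\alpha_{ij})$ directly from $1=\sum_{j}(\epsilon_{ij}\alpha_{ij})(1/\alpha_{ij})$ and then verify that the stated assignment attains it. This buys you a one-line optimality certificate in place of a perturbation argument that, as written, only shows a single pairwise swap does not hurt and leaves the existence of a fully equalized optimizer implicit; your explicit check that every $\alpha_{ij}$ is finite and strictly positive (including $\alpha_{ii}=1/\gamma_i$ under the convention $\rho_{ii}=\infty$) is likewise a detail the paper takes for granted. Both routes land on the same formulas (\ref{opttaskassignments}) and (\ref{comprateoptoptxxx}).
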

\begin{proof}  Maximizing the computing rate is equivalent to minimizing the computation delay $\tau = \frac{1}{R}$. We have
\begin{align}
   \min_{\epsilon_{k\ell},k\in M,\,\ell\in C_k} \tau & = \min_{\epsilon_{k\ell},k\in M,\,\ell\in C_k} \max_{i\in M} \max_{j\in C_i}\epsilon_{ij}\alpha_{ij}  \\ & \geq  \max_{i\in M} \min_{\epsilon_{k\ell},k\in M,\,\ell\in C_k} \max_{j\in C_i}\epsilon_{ij} \alpha_{ij} \\
    & \label{achlowbound} = \max_{i\in M} \min_{\epsilon_{i\ell},\,\ell\in C_i} \max_{j\in C_i}\epsilon_{ij} \alpha_{ij}
\end{align}
Consider now the minimization $\min_{\epsilon_{i\ell},\,\ell\in C_i} \max_{j\in C_i}\epsilon_{ij} \alpha_{ij}$ that appears in the final equality, where we have the inherent constraint $\sum_{\ell\in C_i} \epsilon_{i\ell}= 1$. It is easily seen that there is an optimal task assignment that satisfies $\epsilon_{ij}\alpha_{ij} = \epsilon_{ik}\alpha_{ik},\,\forall j,k$ (Otherwise, if $\epsilon_{ij}\alpha_{ij} > \epsilon_{ik}\alpha_{ik}$, one can consider the assignments $\epsilon_{ij} \leftarrow \epsilon_{ij} - \delta, \epsilon_{ik} \leftarrow \epsilon_{ik} + \delta$, where $\delta$ satisfies $(\epsilon_{ij} - \delta)\alpha_{ij} = (\epsilon_{ik} - \delta)\alpha_{ik}$.). Using the constraint $\sum_{j\in C_i} \epsilon_{ij} = 1$, we obtain the task assignments in (\ref{opttaskassignments}). Substituting the assignments in (\ref{opttaskassignments}) to (\ref{comprateopt}), we obtain the computation rate in (\ref{comprateoptoptxxx}). This concludes the proof. 
\end{proof}

From now on, we assume optimal task assignments at each cluster as indicated by Proposition \ref{propopttaskassign}. We thus seek the tradeoff between the computation rate (\ref{comprateoptoptxxx}) with optimal task assignments and the coverage (\ref{coverageareamaster}). In particular, the tradeoff in (\ref{tradeoffform}) can be expressed as
\begin{align}
\label{tradeoffform2}
R^{\star}(c) = \max\{R' : C \geq c\}.
\end{align}
To understand how difficult the problem in (\ref{tradeoffform2}) is exactly, we consider its following decision version:
\begin{align}
\label{decisionproblem}
\mbox{Given $\rho, c$, is there a clustering with }R' \geq \rho, C \geq c? 
\end{align}

We have the following result.
\begin{theorem}
\label{theorem1}
The problem (\ref{decisionproblem}) is NP-complete.
\end{theorem}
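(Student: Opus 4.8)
The plan is to establish the two halves of NP-completeness separately. \emph{Membership in NP} is routine: take the clustering $(M,\{W_i\}_{i\in M})$ itself as the certificate. Then $R'$ is evaluated in polynomial time from the closed form (\ref{comprateoptoptxxx}), and the coverage $C$ is the $d$-dimensional volume of the union of the $|M|$ radius-$D$ balls centered at $\{u_i : i\in M\}$; for a fixed ambient dimension $d\in\{1,2,3\}$ this volume is computable in polynomial time --- immediately for $d=1$ (a union of intervals), and by standard arrangement/inclusion--exclusion techniques for $d\in\{2,3\}$. One then checks $R'\ge\rho$ and $C\ge c$. The substance of the theorem is therefore NP-hardness.

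\textbf{The reduction.} I would reduce from \textsc{Partition}: given positive integers $a_1,\dots,a_m$ with $\sum_i a_i=2S$ and (w.l.o.g.) $\max_i a_i<S$, decide whether some subset of indices sums to exactly $S$. Build an instance of (\ref{decisionproblem}) with $m+2$ nodes as follows. Use the constant rate function $\rho(\cdot)\equiv w$ with $w=1+\max_i a_i(a_i+1)$, so that $1/\alpha_{ij}=w\gamma_j/(w+\gamma_j)$ for $i\ne j$ and $1/\alpha_{ii}=\gamma_i$. Let $V_d$ be the volume of a radius-$D$ ball in $\mathbb{R}^d$, and place two ``hub'' nodes $P_1,P_2$ together with $m$ co-located (or near-co-located) ``item'' nodes $X_1,\dots,X_m$ so that the three sites $\{P_1\}$, $\{P_2\}$, $\{X_1,\dots,X_m\}$ are pairwise more than $2D$ apart, e.g.\ along a line inside $A$. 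Then the three sites' master balls are pairwise disjoint, the item nodes jointly contribute exactly one ball's worth $V_d$ of coverage, and $C=V_d\cdot(\text{number of sites containing a master})$. Set $\gamma_{P_1}=\gamma_{P_2}=\tfrac12$ and $\gamma_{X_i}=a_iw/(w-a_i)$; a short computation shows an item node then contributes exactly $a_i$ as a worker to any master, while as a master it contributes $\gamma_{X_i}\in(a_i,a_i+1)$. Finally set the coverage target $c=2V_d$ and the rate target $\rho=S$.

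\textbf{Correctness.} If \textsc{Partition} has a solution $I$, take $M=\{P_1,P_2\}$, give $P_1$ the workers $\{X_i:i\in I\}$ and $P_2$ the workers $\{X_i:i\notin I\}$: then $C=2V_d\ge c$ and each hub's cluster has rate $\tfrac12+S>\rho$. Conversely, suppose some clustering meets both targets. Then $C\ge 2V_d$ forces at least two of the three sites to contain a master, hence at least one hub is a master. Since every per-worker weight is exactly the corresponding $a_i$, and every master's own weight lies within an additive $(0,1)$ of an integer, the constraint ``cluster rate $\ge S$'' on each cluster is equivalent to an integer inequality on the subset sum of the worker-$a_i$'s there. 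From $\sum_i a_i=2S$ one then checks that three or more masters are infeasible (the worker weight available, namely $2S$ minus the masters' integer base weights, falls strictly short of what the clusters jointly need to reach $\rho$), that exactly two masters are used, and that the $a_i$'s then split into two parts of sum exactly $S$ (an item chosen as a master being counted on its own side) --- i.e.\ a \textsc{Partition} solution exists. Hence (\ref{decisionproblem}) is a YES-instance iff \textsc{Partition} is, and the construction is plainly polynomial-time; combined with membership in NP this proves the claim. (The same gadget with $k$ hubs, reducing from \textsc{3-Partition}, would in fact give strong NP-completeness.)

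\textbf{Main obstacle.} The delicate step is the converse case analysis: ruling out every ``unintended'' master set --- in particular sets of three or more masters, and sets mixing a hub with one or more item-masters --- and showing none of them can clear both targets. Mechanically this is a single counting inequality between available worker weight and required worker weight, but making it airtight requires the arithmetic around the threshold $\rho=S$ to be exact, which is precisely what the choices $\gamma_{P_j}=\tfrac12$ and $w=1+\max_i a_i(a_i+1)$ are engineered to guarantee: they force ``cluster rate $\ge S$'' to collapse to the integer constraint ``subset sum $\ge S$'', with no boundary slack to exploit.
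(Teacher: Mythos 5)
Your membership-in-NP argument is fine and matches the paper's. For hardness, however, the paper reduces from the minimum unit-disk cover problem (exploiting the geometry of the coverage term directly), whereas you reduce from \textsc{Partition} via the computation-rate term; that is a legitimate strategy in principle, but your converse direction has a genuine gap that your parameter choices do not close. The problem is exactly the "boundary slack" you claim to have engineered away: in a cluster whose master is an item node $X_j$, the non-integer contributions are the master's own excess $\delta_j=a_j^2/(w-a_j)$, which with $w=1+\max_i a_i(a_i+1)$ can be as large as $a_j^2/(1+a_j^2)\approx 1$, \emph{plus} a hub serving as a worker, contributing $w\gamma_P/(w+\gamma_P)=w/(2w+1)\approx 1/2$. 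Their sum can exceed $1$, so the constraint "cluster rate $\ge S$" relaxes to "integer subset sum $\ge S-1$," not "$\ge S$," and the counting argument no longer forces an exact split. Concretely, take $a=(2,2,2,5,5)$, so $S=8$, $\max_i a_i<S$, and no subset sums to $8$ (a NO instance of \textsc{Partition}). Then $w=31$, and the clustering with masters $P_1$ and $X_4$ (the item with $a_4=5$), workers $W_{X_4}=\{P_2,X_1\}$ and $W_{P_1}=\{X_2,X_3,X_5\}$, gives cluster rates $\tfrac{155}{26}+\tfrac{31}{63}+2\approx 8.45\ge 8$ and $\tfrac12+2+2+5=9.5\ge 8$, with coverage $2V_d$: a YES instance of (\ref{decisionproblem}) from a NO instance of \textsc{Partition}. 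So the reduction as stated is unsound.

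The fix is to make \emph{all} fractional contributions uniformly small rather than merely less than $1$ each: e.g., take $\gamma_{P_1}=\gamma_{P_2}=\eta$ for a tiny rational $\eta$ and $w$ large enough that every $\delta_j<1/(2m)$; then every cluster's rate is an integer subset sum plus a total fractional excess strictly below $1$, the constraint collapses to the intended integer inequality, and the bound on the number of masters also goes through cleanly (as written, your exclusion of three or more masters additionally needs $S$ large relative to $m$, which you should either argue or enforce by rescaling the $a_i$). Separately, note that your reduction relies on freely prescribing heterogeneous $\gamma_i$'s and a constant rate function, i.e., hardness is driven entirely by the computation side; the paper's MUDC reduction instead keeps $\gamma_i\equiv 1$ and derives hardness from the geometric covering side, which is closer to the spirit of the coverage--computation tradeoff and avoids any number-theoretic threshold analysis.
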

\begin{proof}
Let us first show that the problem (\ref{decisionproblem}) is in NP. In fact, since $|M| \leq n$ and $|C_i| \leq n,\,\forall i$, the calculation of $R'$ in (\ref{comprateoptoptxxx}) can be accomplished in linear time. Calculating $C$ is equivalent to the problem of calculating the union of $n$ disks, which can be accomplished in $O(n^2)$ time \cite{avis1988computing}. Thus, the problem (\ref{decisionproblem}) is in NP.

We prove the NP-completeness of (\ref{decisionproblem}) via a reduction of the so-called minimum unit-disk cover problem (MUDC) \cite{ko2011complexity}. Consider points $v_1,\ldots,v_{n'} \in \mathbb{R}^d$. The MUDC problem asks, given $k' \geq 1$, whether there is a set of indices $K' \subset \{1,\ldots,n\}$ of cardinality at most $|K'|\leq k'$ such that 
\begin{align}
\bigcup_{i\in K'} \{x:\|v_i - x\| \leq D\} = \bigcup_{i=1}^n \{x:\|v_i - x\| \leq D\}.
\end{align}
The MUDC problem has been shown to be NP-complete \cite{ko2011complexity}.

Consider, now, an instance of the MUDC problem, parameterized by the coordinates $v_1,\ldots,v_{n'}$ and $k'$. We will construct an instance of the problem in (\ref{decisionproblem}) in polynomial time; this particular instance of (\ref{decisionproblem}) will be equivalent to the instance of the MUDC problem, proving the NP-complenetess of (\ref{decisionproblem}). Let $n = 2k'$, $u_i = v_i,\,i=1,\ldots,n'$, and $u_i = v_{n'},\,i=n'+1,\ldots,n$. Let $c' = |\{x:\exists i\in\{1,\ldots,n'\}:\|x - v_i\| \leq D\}|$ denote the covering provided by $v_i$s. Note that $c'$ can be calculated in polynomial time \cite{avis1988computing}. Let $\gamma_i = 1,\,\forall i$, and define 
\begin{align}
\label{pqowepoqiweqwe}
\epsilon \triangleq \min_{i,j\in\{1,\ldots,n\}} \frac{1}{1+\frac{1}{\rho_{ij}}} > 0. 
\end{align}
Note that $\epsilon$ can also be calculated in polynomial time. We can thus consider the following instance of (\ref{decisionproblem}).
\begin{align}
\label{decisionproblemxx22}
\mbox{Is there a clustering with }R' \geq 1+\epsilon, C \geq c'? 
\end{align}
Suppose now that the MUDC problem is feasible, i.e., there exists a cardinality-$k'$ covering of $n'$ disks of equal radius. Let $K' \subset \{1,\ldots,n'\}$ denote the set of nodes in such a covering. In order to construct a solution to (\ref{decisionproblemxx22}), we declare the nodes in $K'$ as master nodes, and assign at least one worker node to each master node. Since $|K'| \leq k'$, and we have $n = 2k'$ nodes available, such a clustering is feasible. To estimate the corresponding computation rate, note that, as a result of the choice $\gamma_i = 1,\,\forall i$, we have $\alpha_{ii} = 1$ whenever $i$ corresponds to a master node. Also, it follows from (\ref{pqowepoqiweqwe}) that $\frac{1}{\alpha_{ij}} \geq \epsilon$ for every $i$ and $j$. Therefore, according to (\ref{comprateoptoptxxx}), the resulting computation rate is at least $1+\epsilon$. The clustering also clearly provides a covering of $c'$. Thus, (\ref{decisionproblemxx22}) is satisfied.

Conversely, suppose that there is a clustering that satisfies (\ref{decisionproblemxx22}). Since each master node can provide a computation rate of at most $1$ task per second, but since the overall computation rate is greater than $1$, each master node should have at least one worker. Removing these workers leaves us with a set of master nodes that provides a coverage of $c'$ with cardinality at most $k'$. Such a set is a solution to the MUDC problem.

The arguments above show that there is a solution to MUDC problem if and only if there is a solution to the instance (\ref{decisionproblemxx22}) of (\ref{decisionproblem}). This concludes the proof that (\ref{decisionproblem}) is NP-complete.
\end{proof}
Theorem \ref{theorem1} shows that finding the optimal clustering of nodes and the corresponding optimal tradeoff is a computationally hopeless problem in general. In the next section, we will seek to develop an efficient algorithm to find good clusterings.

\section{An Algorithm to Find Good Clusterings}
\label{secalgotofindclusters}
We follow a Lagrangian approach in order to design a computationally-efficient algorithm that finds good clusterings. Specifically, we set our objective to maximize the Lagrangian
\begin{align}
L \triangleq R' + \lambda C
\end{align}
over all clusterings, where $\lambda > 0$ is a parameter that allows travel over the $R'$-$C$ trade-off curve. In fact, a larger $\lambda$ translates to a larger emphasis on the coverage performance of the network, resulting in a potentially low computation rate. On the other hand, optimizing for a small $\lambda$ will provide a high computation rate but low coverage. 

We can now consider the algorithm whose pseudocode is shown in Algorithm \ref{algo1}. The algorithm is initialized with $M = \{1,\ldots,n\}$, and $W_i = \emptyset,\,\forall i\in M$. It then repeats the $4$ steps in Lines 3-6 until convergence. In the following, we provide a precise description of each step. In Line 3, we attempt to merge clusters. Specifically, let $M$ denote the existing masters with workers $W_i,\,i\in M$. For every $i<j$ with $i,j\in M$, we check whether the new clustering with masters $M' = M - \{j\}$ and workers $W_i' = W_i \cup \{j\} \cup W_j,W_k' = W_k,\,k \notin\{i,j\}$ provides a better or equal Lagrangian as compared to the existing clustering. If so, we update the master and worker sets to $M'$ and $W_i'$s, respectively. 

\begin{algorithm}
\caption{A Descent Algorithm to Find Good Clusterings}
\begin{algorithmic}[1]
\State Set all nodes as masters and none of the nodes as workers. 
\State {\bf until} convergence of the Lagrangian {\bf do}
\State \hskip1.5em Go through all pairs of clusters, and merge two clusters when it will not increase the Lagrangian.
\State \hskip1.5em Go through all clusters, and find the best master node for each cluster.
\State \hskip1.5em Assign each worker node to an optimal cluster.
\State \hskip1.5em Go through all pairs of workers, and switch the workers' clusters if it will not increase the computation rate.
\end{algorithmic}
\label{algo1}
\end{algorithm}

In Line 4, for each cluster, we find the best choice for the master node among all nodes within the cluster. While finding the best master of a cluster, we keep all other clusters fixed. Mathematically, for every $i\in M$, we calculate the Lagrangians of the topologies $M'' = M - \{i\} \cup \{j\}$, $W_i'' = W_i \cup \{i\} - \{j\}$, $W_k'' = W_k,\,k \neq i$ for different $j \in \{i\}\cup W_i$. The index $j$ that maximizes the Lagrangian replaces $i$ as the new master of the cluster. Note that, changing the master node of any given cluster will not change the computation rates of other clusters. When calculating the Lagrangians with different candidate masters, one thus has to calculate the computation rates of the other clusters only once. Exploiting this observation yields faster execution times for Algorithm \ref{algo1}.

In Line 5, we assign each worker to an optimal cluster, while keeping all master nodes fixed. Since the master nodes are fixed, so is the coverage provided by the network topology. The optimal cluster for a given worker should thus maximize the computation rate. Likewise, Line 6 goes through all pairs of workers, and switches the clusters of two pairs of workers whenever it will improve the computation rate.

It is instructive to note that Algorithm \ref{algo1} is analogous to the $k$-means algorithm. In fact, finding the best master at each cluster is analogous to finding the centroids of clusters in the $k$-means algorithm. Likewise, assigning each worker to its optimal cluster is similar to assigning a data point to its optimal cluster. The algorithm is initialized with as many clusters as the number of nodes and then merges the clusters if necessary to converge to an optimal number of clusters. This corresponds to finding the optimal ``$k$'' in the $k$-means algorithm. One major difference as compared to $k$-means is that in $k$-means, the optimal cluster of a given data point is independent of the clusters of other data points. Therefore, in $k$-means, a step like Line 6, which would consider pairs of data points, would be unnecessary. On the other hand, for our problem, the optimal cluster for a given worker depends on the clusters of other workers. Hence, we have also added Line 6 to ensure that our algorithm avoids bad local maximums. 




We now analyze the convergence and computational complexity Algorithm \ref{algo1} through the following theorem.
\begin{theorem}
With Algorithm \ref{algo1}, the Lagrangian converges to a local maximum after finitely many iterations. Moreover, the algorithm takes $O(\ell  n^4)$ time, where $\ell$ is the number of iterations performed until convergence.
\end{theorem}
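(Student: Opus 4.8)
My plan is to prove the two assertions separately. For \emph{convergence}, the starting point is that no step in Lines 3--6 can decrease $L = R' + \lambda C$. Line 3 performs a merge only when it does not decrease $L$, by construction. In Line 4 the current master of a cluster is itself one of the candidates examined, so installing the best candidate cannot decrease $L$; here I would also invoke the remark in the text that changing a cluster's master alters neither the coverage nor the computation rate of any other cluster, so the comparison is legitimate. Lines 5 and 6 leave the master set untouched and hence fix the coverage $C$; since ``keep the current assignment'' is always an admissible option there, $R'$, and therefore $L$, cannot go down. So the Lagrangian values form a non-decreasing sequence. They are bounded above: $C$ is at most $n$ times the volume of a ball of radius $D$, and from (\ref{comprateoptoptxxx}) with $\frac{1}{\alpha_{ij}} \le \gamma_j$ we get $R' \le \sum_{j=1}^n \gamma_j$. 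Finally, a clustering is encoded by the finite data $(M, W_1, \ldots, W_n)$, so there are only finitely many clusterings and hence only finitely many attainable values of $L$; a non-decreasing sequence taking finitely many values is eventually constant.

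To upgrade ``eventually constant'' to ``terminates after finitely many iterations,'' I would argue as follows. A successful merge strictly decreases the number of clusters, which is at least $1$, so at most $n-1$ merges occur over the entire run, and after the last one the cluster count is frozen. Adopting a fixed tie-breaking convention under which Lines 4, 5 and 6 change the clustering only when they \emph{strictly} increase $L$, any clustering change that is not a merge strictly increases $L$; since $L$ is non-decreasing and takes finitely many values, only finitely many such strict increases can occur. Hence only finitely many state changes happen in total, and since each of Lines 3--6 is itself a bounded loop over pairs or nodes, after finitely many iterations a full pass of Lines 3--6 produces no change and the algorithm stops. At termination no merge, no master swap, and no single- or double-worker move increases $L$, which is exactly the assertion that the final clustering is a local maximum of $L$ over the neighborhood explored by the algorithm.

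For the \emph{running time}, I would first note that evaluating $L$ at a fixed clustering costs $O(n^2)$: computing $R'$ via (\ref{comprateoptoptxxx}) amounts to forming the sums $\sum_{j\in C_i}\frac{1}{\alpha_{ij}}$ (whose terms total $\sum_{i\in M}|C_i| = n$, the clusters partitioning the node set) and then taking a minimum, all in $O(n)$, while computing $C$ is the union-of-$n$-disks problem, solvable in $O(n^2)$ time \cite{avis1988computing}. Line 3 scans the $O(n^2)$ pairs of clusters with one $L$-evaluation per pair: $O(n^4)$. Line 4 tests, over all clusters, $\sum_{i\in M}|C_i| = O(n)$ candidate masters with one $L$-evaluation each: $O(n^3)$. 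Line 5 tries at most $n$ workers in at most $n$ clusters, and since the masters, hence the coverage, are unchanged, each trial needs only an $O(n)$ recomputation of $R'$: $O(n^3)$; Line 6 is $O(n^2)$ worker pairs times $O(n)$ per $R'$ recomputation, i.e. $O(n^3)$. So one iteration costs $O(n^4)$, dominated by Line 3, and $\ell$ iterations cost $O(\ell n^4)$.

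I expect the termination argument to be the main obstacle. Non-decrease of $L$ by itself does not rule out endless cycling among clusterings of equal Lagrangian value, and the ``keep-current-assignment'' observations supply only weak monotonicity. The remedy above --- merges are self-limiting because they shrink a bounded cluster count, while the remaining three moves act only on a strict improvement of $L$ --- closes this gap; one should then check, as is immediate, that this tie-breaking convention does not weaken the local-optimality conclusion, since at the halting point every admissible move, strict or not, fails to raise $L$.
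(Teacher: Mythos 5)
Your proposal is correct and follows essentially the same route as the paper: monotone non-decrease of $L$ across Lines 3--6, boundedness, and the fact that only finitely many clusterings (hence finitely many Lagrangian values) exist, followed by the same $O(n^2)$-per-Lagrangian-evaluation, $O(n^2)$-pairs accounting that yields $O(n^4)$ per iteration. Your extra care on termination --- bounding the number of merges and imposing strict-improvement tie-breaking to rule out cycling among equal-valued clusterings --- tightens a point the paper's proof passes over (it only argues that the Lagrangian \emph{value} stabilizes, via comparing the largest and second-largest attainable values), but it is a refinement of the same argument rather than a different one.
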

\begin{proof}
Algorithm \ref{algo1} provides a monotonically non-decreasing sequence of Lagrangians. Moreover, the Lagrangian is bounded above by the absolute constant $L \leq \sum_{i=1}^n \gamma_i$ representing the aggregate computing power of all nodes in the network. The Lagrangian will thus converge according to the monotone convergence theorem. To show that convergence occurs after finitely many iterations, note that there can only be a finite number of distinct Lagrangians corresponding to a finite number network topologies. Correspondingly, we may define the largest and the second largest Lagrangian values that appear in an instance of Algorithm \ref{algo1} as $L'$ and $L''$, respectively. With this notation, the Lagrangians provided by Algorithm \ref{algo1} thus converges to $L'$. Equivalently, for every $\epsilon > 0$, the Lagrangians should be $\epsilon$-close to $L'$ at all iterations with sufficiently high indices. Choosing $\epsilon = \frac{1}{2}(L' - L'')$, the Lagrangian should equal $L'$ at all sufficiently large indices, proving convergence after finitely many iterations. At convergence, the solution is necessarily a local maximum with respect to the improvements in Lines 3-6.

To prove the computational complexity, we note that each one of the Lines 3-6 of Algorithm \ref{algo1} can be accomplished in $O(n^4)$ time. In particular, going through all pairs of clusters takes $O(n^2)$ as there are at most $\binom{n}{2} = \frac{n(n-1)}{2}$ clusters. Calculating a Lagrangian of a network topology can be accomplished in $O(n^2)$ time, as discussed in the proof of Theorem \ref{theorem1}. Similarly, it can be shown that all remaining steps can be accomplished in $O(n^4)$. This concludes the proof.
\end{proof}
An open problem is to find an upper bound on $\ell$ as a function of $n$ for a more precise description of the computational complexity. Our numerical simulations suggest that $\ell$ is sublinear in $n$, and Algorithm \ref{algo1} converges very fast. Hence, Algorithm \ref{algo1} is very likely a polynomial time algorithm with  $O(n^5)$ worst case time complexity as opposed to the exponential time needed for exhaustive search.

\section{Numerical Results}
\label{secnumerical}
In this section, we provide numerical simulation results that show the performance of our algorithms. We first describe a practical setting to set the various different parameters of the nodes as described in Section \ref{secsystemmodel}. 


\subsection{The UAV Surveillance Scenario}
\label{secscenario}
We survey a square area with side length $10$ km by multiple UAVs that are on the same altitude. In this case, $u_1,\ldots,u_n$ represent the projected ground locations of the UAVs.

Each UAV is equipped with a video camera that acquires video frames over the UAV's visual line of sight. Clearly, the coverage radius $D$ provided by each UAV will depend on the specific application, the resolution of the video camera, and the processing algorithm that processes the video frames. For example, on a clear day, a UAV flying at an altitude of a few hundred meters off the ground will be able to detect a forest smoke or fire that is several kilometers away. However, if the task is to detect criminal activities on the ground, the coverage radius will obviously be much smaller. Here, we consider an application with similar requirements as smoke or fire detection, and thus consider a coverage radius of $2$ km. 

Since the UAVs will be high up on the air, the channels between them can be considered to be free-space path loss channels with a certain exponent $r$. Suppose that the communication bandwidth is $B$ Hz, the carrier wavelength is $\lambda_c$ m,  each UAV transmits with power $P$ Watts, and the  noise power spectral density is $N_0$ Watts/Hz. The achieable rate within a distance $d$ to any of the UAVs can then be modeled by \cite{goldsmith2005wireless}
\begin{align}
\rho(d) \triangleq B\log_2\left(1\!+\! \frac{P}{B N_0} \Bigl(\frac{\lambda_c}{4\pi d_0}\Bigr)^2\left(\frac{d_0}{d}\right)^{r} \right) \mathrm{bits}/\mathrm{sec},\!
\end{align}
where $r$ is the path loss exponent, and $d_0$ depends on the environment. For outdoor attenuation, a typical value is $d_0 = 10$ \cite{goldsmith2005wireless}. Also, we consider $\lambda_c = \frac{1}{3}$ (corresponding to a carrier frequency of $900$ Mhz), $B = 1$ MHz, $P = 0$ dBm, and $N_0 = -170$ dBm. We will present our results for  $r\in\{2,2.5,3\}$.


We now derive a typical practical value for the processing speeds $\gamma_i$s at the UAVs. Suppose that the UAVs acquire high definition video at 720p quality. Typically, 720p videos have a frame rate of $30$ frames/second coded at $4$ Mbits/second. We consider a scenario where each frame of the video is fed to a neural network for further processing. For a  $224 \times 224$ input image, the typical inference times for a state-of-the-art deep neural network such as ResNet-50, which can also be used for smoke or fire detection \cite{sharma2017deep}, ranges from $1$ms on a powerful Tesla V100 GPU \cite{nvidiagpu} to $25$ms \cite{niu201926ms} on a smartphone. We assume a $10$ ms inference time. Given that the frame sizes for a 720p video is $1280 \times 720$, the $10$ ms inference times will be scaled by a factor of roughly $\frac{1280 \times 720}{224 \times 224} \approx 18$, resulting in an inference time of $180$ms per frame. A second of video is then processed within $30\times 180\mbox{ms} = 5.4$s. Defining a ``task'' as the complete processing of one second of video, we may thus set $\gamma_i = \frac{1}{5.4},\,\forall i$. Also, input to each task is of length $b_0 = 4$Mbits (owing to $4$Mbits/sec video data rate). Assuming that the output of the neural network is a basic binary decision (e.g. whether there is a smoker or not), $b_1$ will be negligible as compared to $b_0$. We thus simply assume $b_1 = 0$.

\subsection{Results}
\label{secactualresults}

With a fixed set of network parameters as described in Section \ref{secscenario}, and $50$ UAVs with fixed locations (we will indicate the specific UAV locations later on), we have run Algorithm \ref{algo1} for different values of the Lagrange multiplier $\lambda$. This gave us a set of points $S_1$ on the coverage vs. computation rate space. We then computed the Pareto frontier of these set of points by removing a coverage-rate pair $(C_1,R_1)$ in $S_1$ if there exists another pair $(C_2,R_2) \in S_1$  such that $C_2 \geq C_1$ and $R_2 \geq R_1$. The resulting Pareto frontiers for different path loss exponents are shown in Fig. \ref{paretofig}. The horizontal axis represents the fraction of total area covered, and the vertical axis represents the computation rate in tasks per second. 

\begin{figure}[htbp]
\scalebox{0.63}{\includegraphics{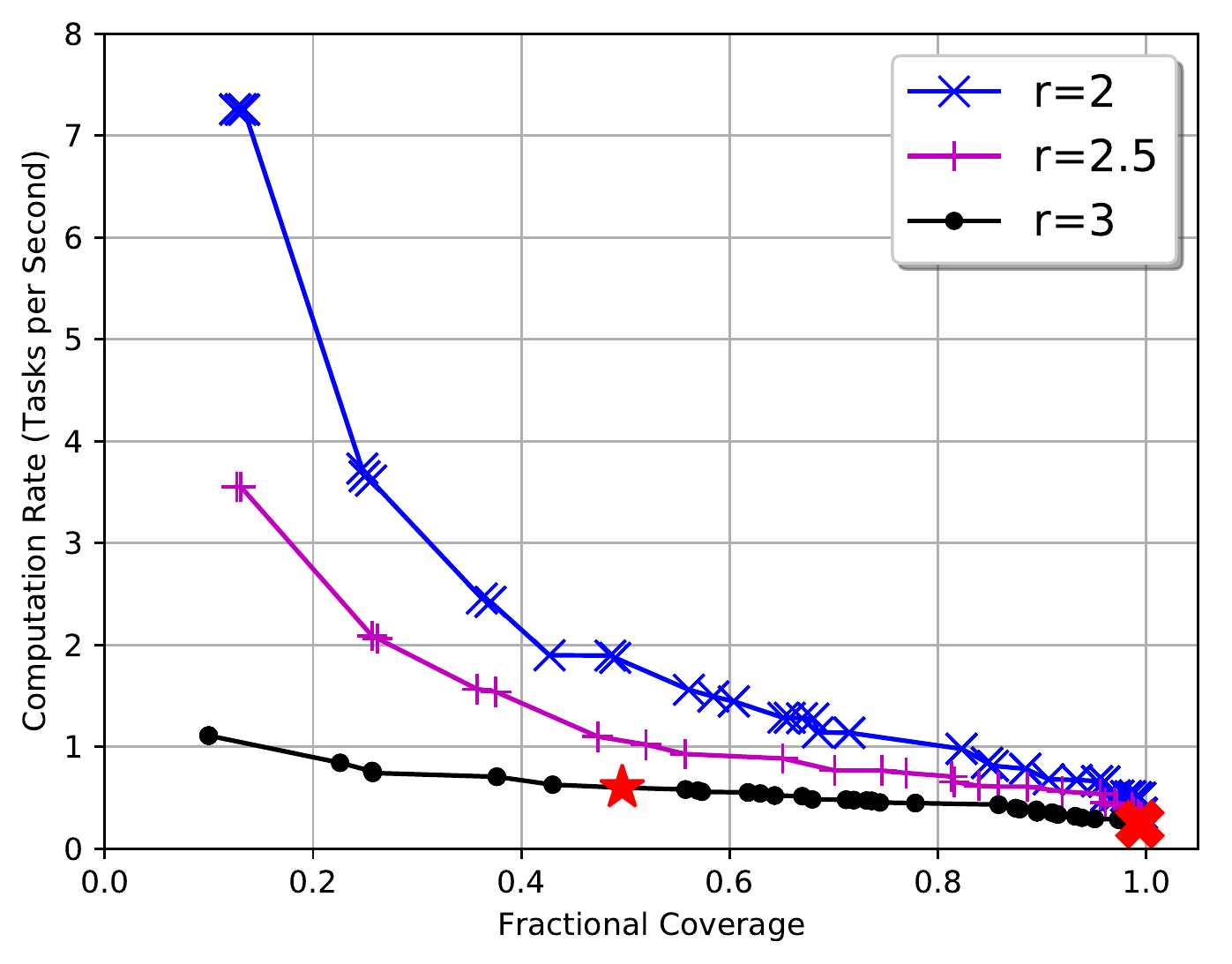}}\vspace{-13pt}
\caption{Pareto frontiers for different path loss exponents $r$.}
\label{paretofig}
\end{figure}

We can observe that the tradeoff improves as we consider a smaller path loss exponent. This is due to increased inter-UAV communication rates, which provide lower computation delays. Let us also recall from Section \ref{secscenario} that we have defined one task as the processing of one second of video through a neural network. Processing all incoming tasks is thus feasible whenever the computation rate is greater than $1$ task per second. We can observe that processing the video stream in real time becomes infeasible for most coverage constraints when the path loss exponent is $3$. In such a scenario, one can decrease the video frame rate to ensure real time inference.

For each Pareto frontier in Fig. \ref{paretofig}, the left-most point corresponds to network topologies with one master node and $49$ worker nodes connected to the unique master. As one travels to the right hand side of each curve, the number of masters increase, effectively increasing coverage but decreasing the computation rate. Two example topologies on the Pareto frontier for $r=3$ are illustrated in Figs. \ref{conf1fig} and \ref{conf2fig}. In particular, Fig. \ref{conf1fig} illustrates the topology that achieves a coverage of $0.4968$ with computation rate $0.5985$; this point is also marked with a red star in Fig. \ref{paretofig}. In Fig. \ref{conf1fig}, each master node is marked with a red square, while each worker node is marked with a blue disk. Each cluster is represented by multiple workers connected to the master via straight lines. Note that these node locations remain fixed for all the points obtained in Fig. \ref{paretofig}. We can observe from Fig. \ref{conf1fig} that masters sometimes make very long connections, and not every worker node is connected to its nearest master. In fact, such a nearest-master strategy can be observed to be suboptimal in general. For example, applying a nearest-master topology to Fig. \ref{conf1fig} would result in the top-right master node losing many of her workers without gaining any new, resulting in a lower overall computation rate.

\begin{figure}[htbp]
\begin{center}\scalebox{0.6}{\includegraphics{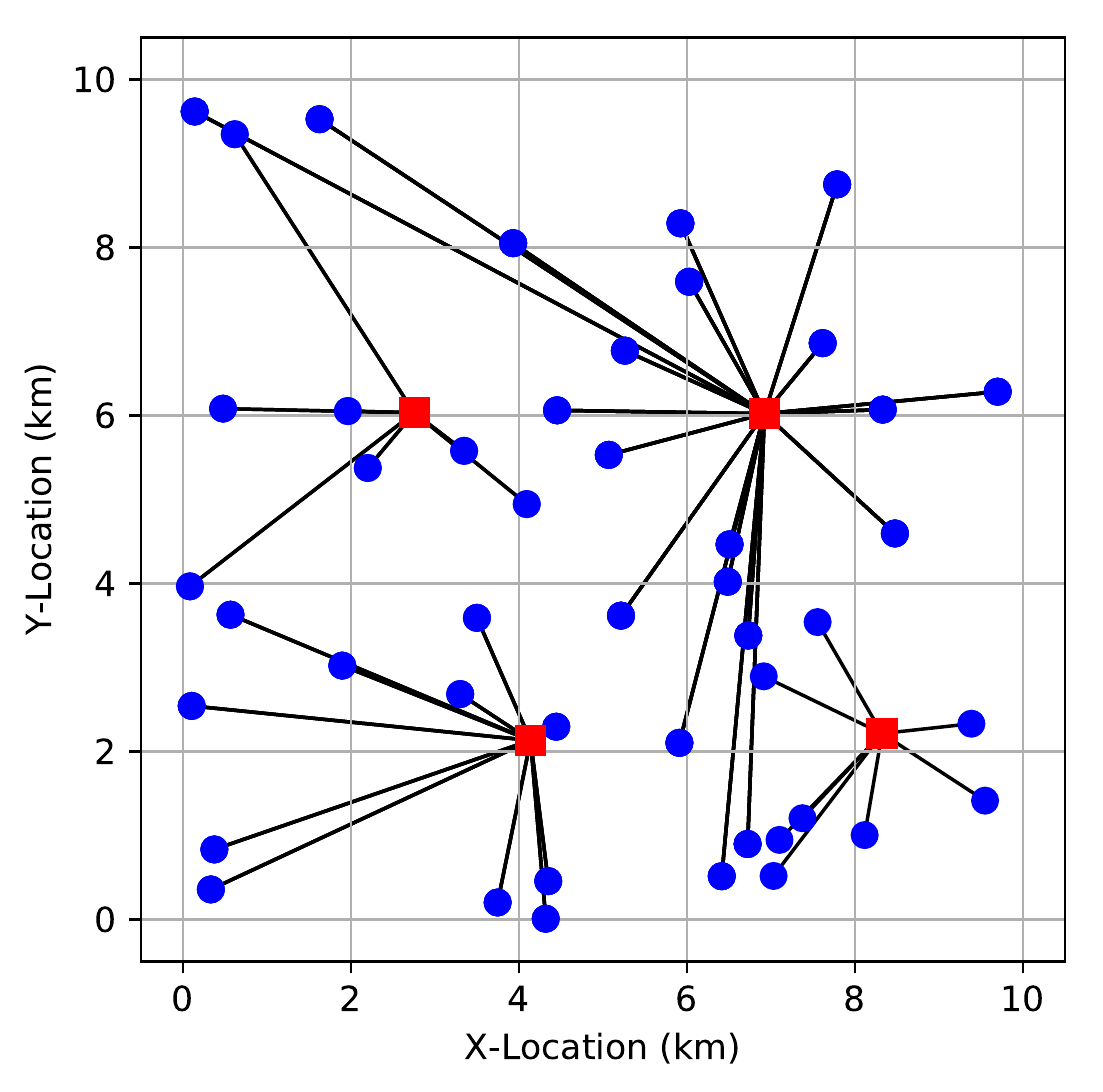}}\end{center}\vspace{-15pt}
\caption{A network topology for $\lambda = 0.1$.}
\label{conf1fig}
\end{figure}

\begin{figure}[htbp]
\begin{center}\scalebox{0.6}{\includegraphics{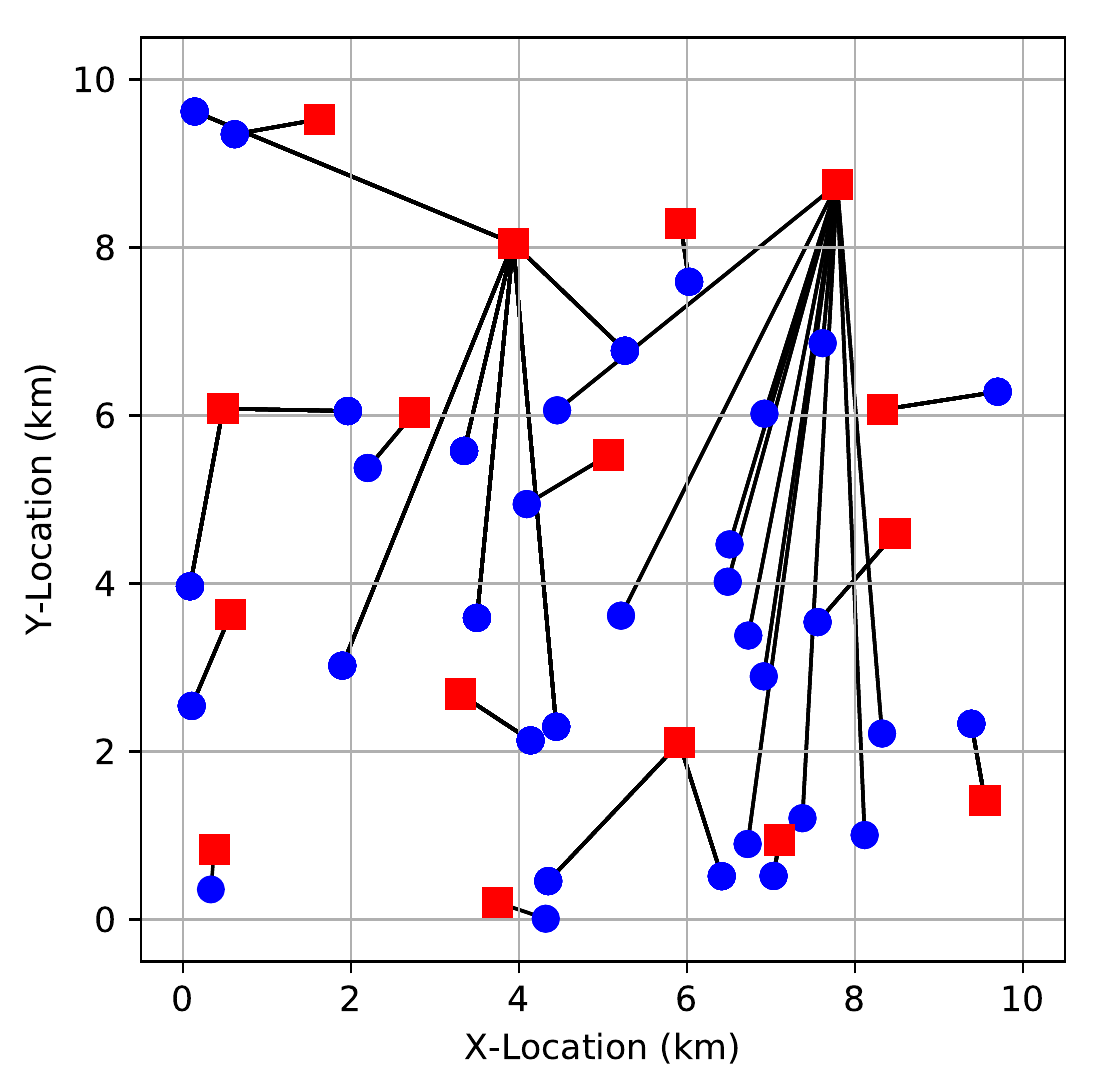}}\end{center}\vspace{-15pt}
\caption{A network topology for $\lambda = 0.5$.}
\label{conf2fig}
\end{figure}

The topology in Fig. \ref{conf2fig} achieves a coverage of $0.9939$ with computation rate $0.2396$; the corresponding point is also marked with a red cross in Fig. \ref{paretofig}. We can observe that there are now $16$ master nodes and $34$ worker nodes. The topology is also highly irregular in the sense that the smallest cluster has size $2$, and the largest cluster, whose master is located at around $(7.9,8.7)$, has size $13$. At first sight, it may seem that distributing the workers more evenly among the masters will result in a better performance, and that the topology in Fig. \ref{conf2fig} is thus strictly suboptimal. However, a careful investigation reveals that the topology in Fig. \ref{conf2fig} is a natural consequence of the coverage constraints and the specific node locations. In fact, let us attempt to construct a topology that achieves the same coverage of $0.9929$ with maximal computation rate. The very high coverage constraint implies that almost all locations on the area of interest should be covered. In this context, the node at $(7.9,8.7)$ is the only node that can cover the top right portion of the area of interest. Hence, (the node at) $(7.9,8.7)$ should be a master node. Similarly, the node at $(6,8.1)$ should be a master node as well. What makes $(7.9,8.7)$ particularly unlucky is that it has no close neighbors to aid its computations. The node at $(7.9,8.7)$ thus has to establish many long-range connections to compensate for the lack of short-range help, resulting in a topology similar to that in Fig. \ref{conf2fig}. The situation is similar for the node at $(4,8)$.

Finally, let us also note that we have also obtained the Pareto frontiers and the corresponding topologies for different realizations of node locations. For the same path loss exponent, the Pareto frontiers for different node realizations were nearly identical with only minor differences. This is due to the ``natural'' averaging out provided by our consideration of a relatively large number of networking nodes. We will report specific results elsewhere due to space limitations.


\section{Conclusions}
\label{secconclusions}
We have formulated and analyzed the tradeoff between the coverage and computation performances of wireless networks. We have shown that finding the optimal tradeoff is an NP-complete problem, and introduced an algorithm to find a locally-optimal solution. Many different variants and extensions of the problem formulated here can be studied. For example, one can consider the tradeoffs in the presence of node mobility,  different models of coverage, or fading scenarios.

\section*{Acknowledgement}
This work was supported in part by the NSF Award CCF-1814717, and in part by an award from the University of Illinois at Chicago Discovery Partners Institute Seed Funding Program.

\bibliography{main} 
\bibliographystyle{IEEEtran}

\end{document}